\newcommand{\Rom}[1]{\uppercase\expandafter{\romannumeral #1\relax}}
\newtheorem*{assumption*}{\assumptionnumber}
\providecommand{\assumptionnumber}{}
\newtheorem*{axiom*}{\assumptionnumber}
\providecommand{\assumptionnumber}{}
\newtheorem*{conjecture*}{\assumptionnumber}
\providecommand{\assumptionnumber}{}
\theoremstyle{definition}
\theoremstyle{theorem}
\newtheorem{theorem}{Theorem}
\theoremstyle{corollary}
\theoremstyle{lemma} 
\newtheorem{lemma}[theorem]{Lemma}
\theoremstyle{Proposition} 
\newtheorem{Proposition}[theorem]{Proposition}
\theoremstyle{Conjecture}
\newcommand{\calC}{{\mathcal C}}
\newcommand{\Tr}{{\rm Tr}}
\definecolor{BSorange}{RGB}{255,127,0}
\begin{document}
    \title{Verlinde formula from entanglement} 
    
    \author{Bowen Shi}
    \affiliation{Department of Physics, The Ohio State University, Columbus, OH 43210, USA}

    \date{\today}
    
    \begin{abstract}
    We derive the Verlinde formula from a recently advocated set of axioms about entanglement entropy [B. Shi, K. Kato, I. H. Kim, arXiv:1906.09376 (2019)]. For any state that obeys these axioms, we can define a quantity that can be identified as the topological $S$-matrix of an abstract anyon theory. We show that the $S$-matrix is unitary and that it recovers the fusion multiplicities of the underlying anyon theory through the Verlinde formula. Importantly, we rigorously prove the modularity of the theory, which further implies that the mutual braiding statistics of anyons are nontrivial. The key to the proof is a generalized quantum state merging technique,  which generates a topology beyond that of any subsystem of the original physical system. 
    \end{abstract}

    \pacs{}
    \maketitle
    
    \section{Introduction}
    
    Interacting quantum many-body systems can exhibit a variety of exotic phenomena. In a strongly interacting regime, the low-energy excitations may obey emergent laws \cite{anderson1972more} that do not necessarily hold at the level of the constituent particles. In the context of two-dimensional (2D) gapped phases, anyons,
    which appear naturally in topologically ordered systems \cite{wen1990topological}, are expected to be described by the algebraic theory of anyon~\cite{2006AnPhy.321....2K}. It is a general framework that captures the fusion and braiding properties of anyons \footnote{The mathematical framework underlying the algebraic theory of anyon is the unitary modular tensor category (UMTC), see \cite{etingof2016tensor} for a review. It is sometimes also referred to as topological quantum field theory (TQFT), which may have different meanings (either that by Witten~\cite{witten1988topological} or Turaev~\cite{turaev1994quantum}), depending on the context.}.  
    In particular, it is expected that the theory is \emph{modular}, which is the requirement of a unitary topological $S$-matrix. Modularity is tied to the braiding nondegeneracy of the theory.
    Furthermore, there is a nontrivial relation between the $S$-matrix and the fusion multiplicities, which is known as the Verlinde formula.

    Historically, the fusion rules and the Verlinde formula were first derived in a different physical context, i.e., the general framework of conformal field theory \cite{belavin1984infinite,verlinde1988fusion,moore1989classical,francesco2012conformal}. The key underlying assumption is conformal invariance. This assumption is physically natural for a critical point in which scale invariance is expected to emerge. On the other hand, conformal invariance is not a physically natural assumption for gapped systems.
    
    In the physical context of 2D gapped systems, the algebraic theory of anyon~\cite{2006AnPhy.321....2K} is well known at this point. Despite its success, the properties of fusion multiplicities, the requirement of a unitary $S$-matrix, and the Verlinde formula, etc. are essentially plugged in from the underlying axioms of the theory. 
    It remains a fundamental problem to derive these axioms from an arguably more physical assumption for 2D gapped systems.
    
    One such attempt was recently made in Ref.~\cite{2019arXiv190609376S}. The authors identified two local entropic conditions (axiom {\bf A0} and {\bf A1} of \cite{2019arXiv190609376S}) as a reasonable starting point to derive the axioms of the anyon theory. What gives credence to these axioms is the conjectured area law of entanglement \cite{2006PhRvL..96k0404K,2006PhRvL..96k0405L}, which would imply the proposed axioms. The two axioms capture the quantum Markov chain structure of gapped 2D ground states \cite{2003RvMaP..15...79P,2004CMaPh.246..359H,2013PhRvL.111h0503K,2014arXiv1405.0137K,2015PhRvB..92k5139K,2016PhRvA..93b2317K,2019PhRvB..99c5112S,shi2019seeing}, which is a statement about the many-body quantum correlation. While currently there is no rigorous proof of the entanglement area law in 2D, it is wildly accepted at the point. It is explicitly verified in a large class of exactly solved models \cite{2003AnPhy.303....2K,2005PhRvB..71d5110L}, and it shows excellent agreement with numerical results  \cite{2011NatPh...7..772I,jiang2012spin}. It should be pointed out the axioms hold only approximately in realistic models, and there are fine-tuned 2D gapped states which violate {\bf A1} at all length scales \cite{Bravyi2008,2016PhRvB..94g5151Z,2019PhRvL.122n0506W}. Nevertheless, current evidence is still consistent with the conjecture that the area law is a good approximation on large length scales for a probable 2D gapped ground state.

    By starting from {\bf A0} and {\bf A1}, Ref.~\cite{2019arXiv190609376S} has defined the superselection sectors (i.e., anyon types), the fusion rules, and has derived the set of conditions that the fusion multiplicities are expected to satisfy. Furthermore, the authors independently derived the well-known formula of topological entanglement entropy (TEE) \cite{2006PhRvL..96k0404K,2006PhRvL..96k0405L}.
    These data are uniquely specified if one has access to a single quantum state. The superselection sectors, fusion multiplicities, and the consistency conditions are captured by the structure and the self-consistency relations of the information convex sets~\cite{2019arXiv190609376S}. It also shows that a deformable unitary string operator exists, which creates an anyon-antianyon pair. 
    
    In this work we show that a unitary $S$-matrix can be defined in the framework~\cite{2019arXiv190609376S}. We define a quantity which can be identified with the $S$-matrix, and we show it recovers the fusion multiplicities through the Verlinde formula. This implies that the theory is modular. Physically, this means entanglement area law implies the nontrivial braiding statistics of anyons in addition to the fusion rules. 
    
    We further expect the logic developed in this work to be useful in the classification of 3D topologically ordered systems, topological defects, and the gapped domain walls separating two gapped phases.

    \section{Background}
    Because our derivation is built upon the framework \cite{2019arXiv190609376S}, we first recall the setup and collect the relevant facts.
    We consider a 2D quantum many-body system, the Hilbert space of which has a tensor product structure. 
    We consider a quantum state $\vert\psi\rangle$ of this quantum system, which satisfies the following two conditions on each bounded-radius disk, see Fig.~\ref{Axioms}. Let $S_A=-\Tr(\sigma_A \ln \sigma_A)$ be the von Neumann entropy of the reduced density matrix $\sigma_A=\Tr_{\bar{A}} \vert\psi\rangle\langle \psi\vert$, where $\bar{A}$ is the complement of $A$. When the disk is divided into $BC$, we require that
    \begin{equation}
    S_{BC} + S_{C} - S_{B} = 0. \label{eq:A0}
    \end{equation}
    When the disk is divided into $BCD$, we require that
    \begin{equation}
    S_{BC} + S_{CD} - S_{B} - S_{D} = 0. \label{eq:A1}
    \end{equation}
    These two local  entropic conditions are known as axiom {\bf A0} and {\bf A1} in  \cite{2019arXiv190609376S}. Credit should be given to Kim for the original thoughts on these two conditions \footnote{These two entropic conditions were originally proposed in \cite{2014arXiv1405.0137K}. The first attempt at deriving the axioms of anyon theory from these conditions was presented in a conference Ref.~\cite{Kim2015sydney}.}.
    We shall refer to the state $\vert \psi\rangle $ (or its reduced density matrices) as the reference state.
    Physically interesting examples are the ground states of topologically ordered systems \cite{wen1990topological}, for which the bounded-radius disks and the subsystems $B$, $C$, $D$ are required to be larger than the correlation length. This approach is Hamiltonian independent.

    \begin{figure}[h]
	\centering
    \begin{tikzpicture}
    \node[] (E) at (0,0) {\includegraphics[scale=0.600]{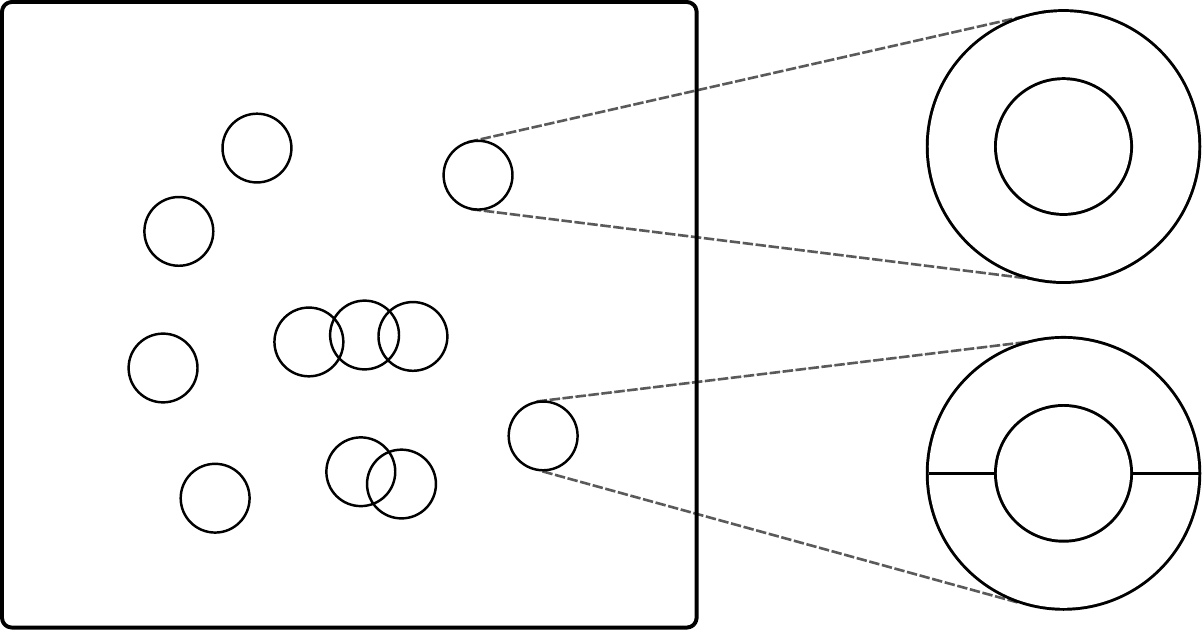}};
    \node[] (A) at (2.82,1.03) {\scriptsize{$C$}};
    \node[] (B) at (2.82,1.63) {\scriptsize{$B$}};
    \node[] (A) at (2.82,-0.37) {\scriptsize{$B$}};
    \node[] (B) at (2.82,-0.97) {\scriptsize{$C$}};
    \node[] (A) at (2.82,-1.57) {\scriptsize{$D$}};
    \end{tikzpicture}
	\caption{The reference state $\vert \psi\rangle$ on a 2D plane. It satisfies two entropic conditions (axioms {\bf A0} and {\bf A1} of Ref.~\cite{2019arXiv190609376S}), namely, for every bounded-radius disk, which is divided into $BC$ or $BCD$, we require Eq.~(\ref{eq:A0}) or Eq.~(\ref{eq:A1}) to hold. The subsystems $B$, $C$, $D$ can be deformed provided that the deformation keeps the topology intact.}
	\label{Axioms}
    \end{figure}

    Given a reference state satisfying axiom {\bf A0} and {\bf A1},  a finite set of superselection sector labels $\calC=\{1,a,b,\cdots \}$ and a set of fusion multiplicities $\{ N_{ab}^c \}$ can be defined. Here $1$ is the unique vacuum sector, and each $a\in \calC$ has a unique antiparticle $\bar{a}\in\calC$. The multiplicities, which are nonnegative integers responsible for the fusion rules $a\times b=\sum_c N_{ab}^c \,c$, are shown to satisfy all the expected conditions (see Appendix~\ref{ap:fusion_rules} for the conditions). The quantum dimensions $\{d_a\}$ can be uniquely defined according to $d_a d_b =\sum_c N_{ab}^c d_c$, and $\mathcal{D}=\sqrt{\sum_a d_a^2}$ is the total quantum dimension. 
    
    These universal data and the consistency relations emerge from the geometry and self-consistency relations of the information convex sets~\footnote{The definition in \cite{2019arXiv190609376S}, which is based on a single quantum state, was first considered in a slide of I. H. Kim, 2015 \cite{Kim2015sydney}.  The author introduced the terminology information convex (set) in \cite{2019PhRvB..99c5112S} without knowing Kim's slides, and the definition was based on a particular form of the Hamiltonian. Nonetheless, the author was inspired by a discussion in \cite{2015PhRvB..92k5139K}. The information-theoretic consistency of information convex sets was briefly investigated in \cite{shi2019seeing} based on structure assumptions.}. The information convex set $\Sigma(\Omega)$ is a convex set of density matrices defined for a subsystem $\Omega$, given the reference state $\vert\psi\rangle$. The sets are isomorphic for a pair of subsystems that can be smoothly deformed into each other, and every element is locally indistinguishable from the reference state. We will need part of the structure theorems of the information convex sets proved in Ref.~\cite{2019arXiv190609376S}. For an annulus $X$ (which is contained in a disk region), the information convex set $\Sigma(X)$ is a simplex with a finite set of extreme points. These extreme points are in one-to-one correspondence with the set of superselection sectors,  $\{\sigma^a_{X} \}_{a\in\calC}$. Distinct extreme points are orthogonal, i.e., $\sigma_X^a\perp \sigma_X^b$ for $a\ne b$. The reference state reaches the extreme point $\sigma^1_X$, which carries the vacuum sector. 
    
    Reference~\cite{2019arXiv190609376S} further derives the well-known formula of TEE, $\gamma=\ln \mathcal{D}$. This value comes from the entropy difference, $2\gamma = S(\sigma^{\ast}_X)- S(\sigma^1_X)$, where $\sigma^{\ast}_X$ is the maximal-entropy element in the ``center" of $\Sigma(X)$. All superselection sectors contribute to the TEE because they correspond to distinct extreme points 
    \footnote{Our setup and axioms are natural for models with bosonic local degrees of freedom. Fermionic models deserve a separate study.}. Moreover, the reference state $\vert \psi\rangle $ is long-range entangled \cite{2010PhRvB..82o5138C} if $\Sigma(X)$ has more than one extreme point.
    
    Finally, Ref.~\cite{2019arXiv190609376S} shows the existence of a deformable unitary string operator which creates a pair $(a,\bar{a})$. The positions of the anyons can be chosen to be two bounded-radius disks. On an annulus $X$ surrounding $a$, the extreme point $\sigma^a_X\in \Sigma(X)$ is reached.
    The support of the string can be deformed freely in a topological manner.

    \section{The main result and its proof}
    
    The main result of this work is the definition of a quantity for a reference state $\vert\psi\rangle$, which is identified with the topological $S$-matrix of the underlying anyon theory. We show the $S$-matrix we define is unitary, and it recovers the fusion multiplicities through the Verlinde formula
    \begin{equation}
    N_{ab}^c =\sum_{x\in \calC}\frac{S_{ax} S_{bx} S_{\bar{c} x} }{ S_{1x}}, \label{eq:Verlinde}
    \end{equation}
    where the components of the $S$-matrix, $S_{ab}$ with $a,b\in \calC$, have $S_{a1}= \frac{d_a }{ \mathcal{D}}$ and  the following symmetries:
    \begin{equation}
    S_{ab}= S_{ba},\quad S_{ab} = S^{\ast}_{ \bar{a} b}. \label{eq:S_sym}
    \end{equation}
    This establishes the \emph{modularity} of the theory, and it is tied to the \emph{braiding nondegeneracy}. It corresponds to an independent axiom of the algebraic theory of anyon~\footnote{It does not follow from the pentagon equation or the hexagon equations.}.
    We derive this result from axioms {\bf A0} and {\bf A1} of Ref.~\cite{2019arXiv190609376S}.

    \subsection{Our definition of the $S$-matrix}

    \begin{figure}[h]
	\centering
        \begin{tikzpicture}
\node[] (E) at (0,0) {\includegraphics[scale=0.66]{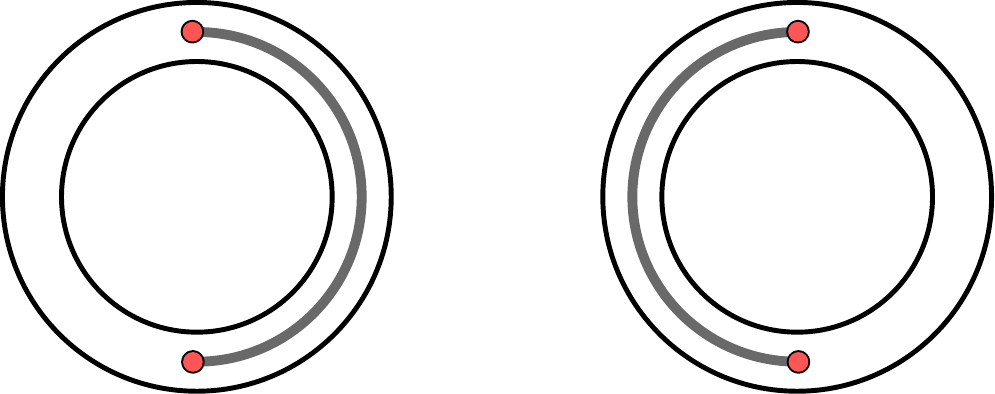}};
\node[] (A) at (-2.277,1.1) {{$\bar{a}$}};
\node[] (B) at (-2.277,-1.1) {{${a}$}};
\node[] (A) at (2.277,1.1) {{$\bar{a}$}};
\node[] (B) at (2.277,-1.1) {{${a}$}};
\node[] (A) at (-2.046,-1.75) {(a)};
\node[] (B) at (2.046,-1.75) {(b)};
\end{tikzpicture}
	\caption{An annulus $X$ and string operators supported within it. (a) String operator $U_R^a$ which creates a pair of excitations $a$ and $\bar{a}$ on the reference state.  (b) The string operator $U_L^a$ is obtained by deforming $U_R^a$  on the reference state.}
	\label{ULUR_PDF}
\end{figure}
    
    We define $S_{ab}$ as follows:
    \begin{eqnarray}
    S_{ab}&\equiv& \frac{d_a d_b}{ \mathcal{D}} f_{ab}, \label{def:S}\\
    f_{ab} &\equiv& \Tr (U_L^{a\dagger} U_R^a \sigma^b_X). \label{def:fab}
    \end{eqnarray}
    Here $X$ is an annulus, and $U_R^a$ is an operator which creates a pair of anyons $(a,\bar{a})$, see Fig.~\ref{ULUR_PDF}. $U_L^a$ is obtained from $U_R^a$ by a deformation on the reference state, namely, we require that $U_L^a\vert \psi\rangle = U_R^a \vert \psi\rangle$.
    $\sigma_X^b$ is an extreme point of the information convex set $\Sigma(X)$.
    By definition, $\Tr (U_L^{a\dagger} U_R^a \sigma^1_X)=1$. This implies that $f_{a1}=1$, $\forall a$.

    This $S_{ab}$ is well defined in the sense that it is invariant under the deformation of three things: the annulus $X$,  the support of the strings, the positions of the anyons. To establish this fact, we first notice that the extreme point $\sigma^b_X$ can be obtained by acting string operators on the reference state. Therefore, one can rewrite Eq.~(\ref{def:fab}) as an expectation value of four string operators. First, for generic deformable unitary strings (see Fig.~\ref{UV}), we define
    \begin{equation}
    f(U,V)\equiv \langle \psi\vert U_L^{\dagger} V_R^{\dagger} U_R V_L \vert \psi\rangle. \label{eq:exp_uv}
    \end{equation}
    It recovers $f_{ab}$ when the strings carry fixed sectors, i.e.,
    \begin{equation}
    f_{ab}= f(U^a,V^b)=\langle \psi\vert U_L^{a\dagger} V_R^{b\dagger} U_R^{a} V_L^{b} \vert \psi\rangle.  \nonumber
    \end{equation}
    Because these string operators act directly on the reference state (either to the left on $\langle \psi\vert$ or to the right on $\vert\psi\rangle$), small deformation of any one of them will leave $f(U,V)$ invariant. Moreover, modifying the string operators by a slight change of the position of an excitation (without passing the excitation through another string) will not affect the value of $f(U,V)$. This is because applying this new string operator on the reference state is equivalent to applying the original one and then applying an additional unitary operator supported on the union of two disk-like regions. In the expectation value (\ref{eq:exp_uv}), these additional unitary operators are canceled.

    \begin{figure}[h]
	\centering
    \begin{tikzpicture}
    \node[] (E) at (0,0) {\includegraphics[scale=0.53]{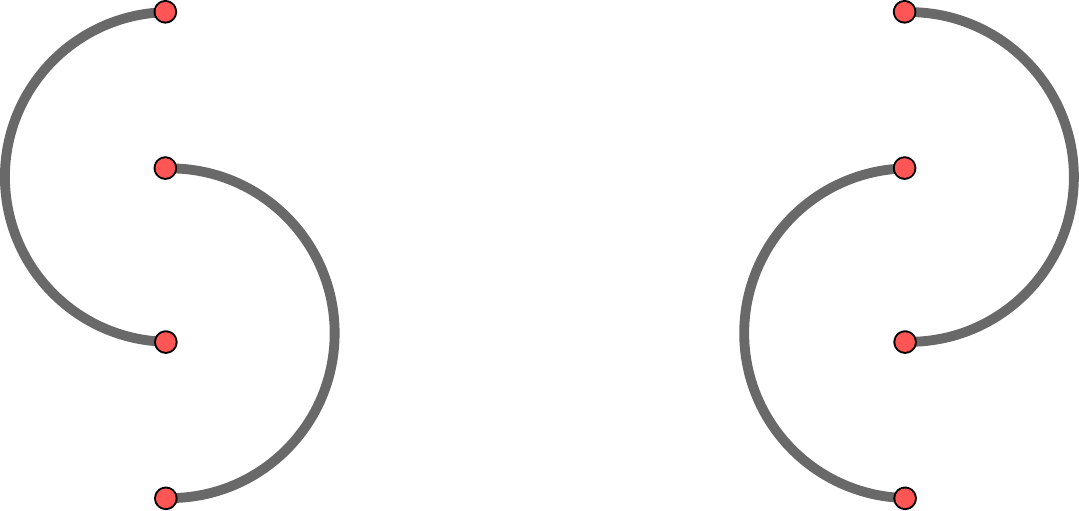}};
    \node[] (A) at (-2.025,-1.85) {\small{(a)}};
    \node[] (B) at (1.975,-1.85) {\small{(b)}};
    \node[] (B) at (0.8,-0.4105 ) {{$U_L$}};
    \node[] (B) at (3.15,0.4105) {{$V_R$}};
    \node[] (B) at (-0.8,-0.4105 ) {{$U_R$}};
    \node[] (B) at (-3.15,0.4105) {{$V_L$}};
    \end{tikzpicture}
	\caption{Two distinct ways to create four excitations: (a) with $U_R$ and $V_L$, (b) with $U_L$ and $V_R$. Here $U_L \vert \psi\rangle = U_R\vert \psi\rangle$ and  $V_L \vert \psi\rangle = V_R\vert \psi\rangle$. Depending on the context of the discussion, an operator may either correspond to a string carrying a fixed sector or a string bundle.}
	\label{UV}
    \end{figure}

    Using the trick of deforming the string operators, taking a partial trace, and making use of the aforementioned invariant property, one finds 
    \begin{equation}
    f_{ab} = f_{ba},\quad
    f_{ab} = f^{\ast}_{\bar{a}b}. \label{eq:sym_f}
    \end{equation}
    In more detail, to verify these identities, one can diagrammatically represent both sides of the identity and then smoothly deform one to another. The deformation involves both the strings and the anyon positions.
    These identities imply that our definition of $S$-matrix obeys the requisite symmetries~(\ref{eq:S_sym}).

    \subsection{Proof of the Verlinde formula}
    
    To facilitate the proof, we remark on the approach of deriving (\ref{eq:sym_f}). First, the deformation of string operators and taking a partial trace allows us to obtain a quantity in a few different ways. By matching these results, one can derive a constraint. Second, the deformation of a string operator is a rather general property. It works not only for a string which carries a fixed sector but also for a \emph{string bundle}, which is a product of multiple string operators with disjoint supports (see Fig.~\ref{string_bundle})  \footnote{The same trick works for a   string which prepares a generic element of $\Sigma(X)$ for an annulus $X$ surrounding an excitation. This fact is not needed in the proof.}.
    By applying the above idea to string bundles, we obtain the following proposition:

\begin{figure}[h]
	\centering
    \begin{tikzpicture}
    \node[] (E) at (0,0) {\includegraphics[scale=0.7]{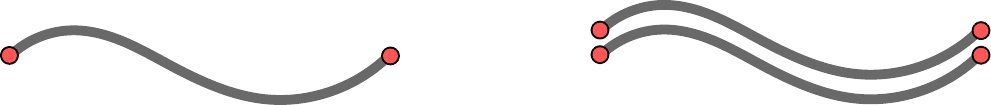}};
    \node[] (B) at (-2.1,-0.9) {(a)};
    \node[] (B) at (2.1,-0.9) {(b)};
    \end{tikzpicture}
	\caption{(a) A single  string. (b) A string bundle. In this particular figure, the string bundle consists of two strings.}
	\label{string_bundle}
\end{figure}

    \begin{Proposition} \label{Prop:1}
        The $S$-matrix  we define satisfies 
        \begin{equation}
        \sum_c N_{ab}^c S_{cx} = \frac{ S_{ax} S_{bx} }{ S_{1x} }. \label{eq:NS1}
        \end{equation}
    \end{Proposition}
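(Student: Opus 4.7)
The plan is to extend the scalar in Eq.~(\ref{eq:exp_uv}) from a single string $U$ to a string \emph{bundle} $U^{ab}\equiv U^{a}U^{b}$ made of two deformable strings, and to evaluate $f(U^{ab},V^{x})$ in two topologically inequivalent configurations; the equality of the two answers will produce Eq.~(\ref{eq:NS1}). The object $f(U^{ab},V^{x})$ is well-defined: when $U^{a}$ and $U^{b}$ have disjoint supports and each satisfies $U_{L}\vert\psi\rangle=U_{R}\vert\psi\rangle$, the composite $U^{ab}$ inherits that property, and the deformation-invariance argument given in the paragraph before Eq.~(\ref{eq:sym_f}) applies unchanged.

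For the \emph{factored} configuration I would place $U^{a}$ and $U^{b}$ as two concentric, mutually disjoint loops around the $x$-anyon, sitting in nested annuli $X_{1}\subset X_{2}$ inside a larger annulus $Y$. Splitting the trace between the two sub-annuli and using that the restriction of $\sigma^{x}_{Y}$ to each $X_{i}$ is again the extreme point $\sigma^{x}_{X_{i}}$, the outer monodromy $U^{b\dagger}_{L}U^{b}_{R}$ acts on the $x$-charge sector as the scalar $f_{bx}$, and the remaining inner trace then produces $f_{ax}$, so that
\[
f(U^{ab},V^{x}) = f_{ax}\, f_{bx}.
\]

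For the \emph{fused} configuration I would instead bring the two strings together at one endpoint, so that $(a,b)$ lives in a single bounded-radius disk while $(\bar a,\bar b)$ remains at the far endpoint. On a small annulus $X'$ surrounding the composite endpoint, the simplex structure of $\Sigma(X')$ together with the quantum-dimension weighting of its extreme points derived in Ref.~\cite{2019arXiv190609376S} identify the reduced state of $U^{ab}\vert\psi\rangle$ as the mixture $\rho_{X'}=\sum_{c}\frac{N_{ab}^{c} d_{c}}{d_{a} d_{b}}\sigma^{c}_{X'}$, and linearity in the extreme-point decomposition gives $f(U^{ab},V^{x})=\sum_{c}\frac{N_{ab}^{c} d_{c}}{d_{a} d_{b}} f_{cx}$. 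Equating the two evaluations produces $d_{a} d_{b}\, f_{ax} f_{bx}=\sum_{c} N_{ab}^{c} d_{c} f_{cx}$, and substituting $S_{ab}=\frac{d_{a} d_{b}}{\mathcal{D}} f_{ab}$ together with $S_{1x}=\frac{d_{x}}{\mathcal{D}}$ rearranges this into exactly Eq.~(\ref{eq:NS1}).

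The hardest step is the fused one: one must rigorously justify that the reduced state of the composite $(a,b)$-excitation on the surrounding annulus $X'$ really is the $N_{ab}^{c} d_{c}/(d_{a} d_{b})$-weighted mixture of extreme points of $\Sigma(X')$, rather than merely some element of $\Sigma(X')$ supported on the channels of $a\times b$. The quantum-dimension weighting has to come from the vacuum-pair origin of the two strings combined with the simplex structure and TEE formula of $\Sigma(X')$ established in Ref.~\cite{2019arXiv190609376S}. A secondary, more routine check is that in the factored configuration the outer loop genuinely acts as a scalar on the $x$-charge sector of the inner annulus; this follows by deforming the outer string and reducing to the single-string identity, but must be spelled out to make the factorization rigorous.
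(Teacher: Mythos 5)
Your proposal follows essentially the same route as the paper's proof: evaluate $f(U^{ab},V^{x})$ for a two-string bundle in two ways, obtaining $f_{ax}f_{bx}$ from the factorization of the extreme point on disjoint sub-annuli, and $\sum_{c}\frac{N_{ab}^{c}d_{c}}{d_{a}d_{b}}f_{cx}$ from the fact (established in Refs.~\cite{2019arXiv190609376S,shi2019seeing}) that the fused bundle produces sector $c$ with probability $N_{ab}^{c}d_{c}/(d_{a}d_{b})$, then match the two expressions. The one refinement worth noting is that the factored evaluation requires the full tensor-product form $\sigma^{x}_{X_{1}X_{2}}=\sigma^{x}_{X_{1}}\otimes\sigma^{x}_{X_{2}}$ of the extreme point on the union of the disjoint sub-annuli (proved in Ref.~\cite{2019arXiv190609376S}), not merely that each marginal is $\sigma^{x}_{X_{i}}$ --- this is precisely how the paper disposes of the step you flagged as a ``routine check.''
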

    
    \begin{proof}
        We show that $\sum_c P_{(a\times b\to c) } f_{cx} = f_{ax} f_{bx}$, then Eq.~(\ref{eq:NS1}) follows. Here $P_{(a\times b\to c)} =\frac{N_{ab}^c d_c}{d_a d_b}$. Let us consider $f(U^{ab},V^{x})$, where $U^{ab}$ is a string bundle consisting of two strings with sectors $a$ and $b$, and $V^x$ is a string with sector $x$. We calculate $f(U^{ab},V^x)$ in two ways. 
        
        First, we have
        \begin{equation}
        \begin{aligned}
        f(U^{ab},V^x)&=\Tr(U^{ab\dagger}_L U_R^{ab}\sigma_X^x)\\
        &=f_{ax} f_{bx}. \label{eq:fuv_1}
        \end{aligned}
        \end{equation}
        In the first line, we have deformed $V_R^{x\dagger}$ and done a partial trace such that the remaining subsystem $X$ is an annulus containing $U^{ab}_L$ and $U^{ab}_R$.
        In the second line, we have used the fact that the extreme point $\sigma^x_X$ is ``factorizable." Specifically, in Ref.~\cite{2019arXiv190609376S}, it was shown that the extreme points of an annulus, restricted to disjoint sub-annuli, have a tensor product form. Therefore, we can split $U_{L}^{ab}$ and $U_{R}^{ab}$ into two families, $(U_L^{a}, U_R^{a})$ and $(\widetilde{U}_{L}^b, \widetilde{U}_R^b)$, so that the string operators for $a$ and $b$ are supported on disjoint sub-annuli. Because operators belonging to different families commute with each other, the expectation value becomes:
        \begin{equation}
        \begin{aligned}
            \Tr(U^{ab\dagger}_L U_R^{ab}\sigma_X^x)&= \Tr(U_{L}^{a\dagger}U_R^a \widetilde{U}_L^{b\dagger}\widetilde{U}_R^b \sigma_X^x)\\
            &=\Tr(U_{L}^{a\dagger}U_R^a \sigma_{X_1}^x) \Tr(\widetilde{U}_L^{b\dagger} \widetilde{U}_R^b \sigma_{X_2}^x), \nonumber
        \end{aligned}
        \end{equation}
        where $X_1,X_2 \subseteq X$ are disjoint sub-annuli of $X$. 
        
        Second, we deform the string $U_R^{ab}$ in the same manner. Because the string bundle $U^{ab}$ can produce sector $c$ on an annulus surrounding $a$ and $b$ with probability $\frac{N_{ab}^c d_c}{d_a d_b}$ \cite{2019arXiv190609376S,shi2019seeing}, we obtain the following expression:
        \begin{equation}
        f(U^{ab},V^x)= \sum_{c} P_{(a\times b\to c) } f_{cx}. \label{eq:fuv_2}
        \end{equation} 
        
        By matching the two expressions (\ref{eq:fuv_1}) and (\ref{eq:fuv_2}) one obtains Eq.~(\ref{eq:NS1}).
    \end{proof}

   Note that Proposition~\ref{Prop:1} in itself does not imply modularity (i.e., that the $S$-matrix is unitary). For example, a solution like $f_{ab}=1$, $\forall a,b$ is consistent with Eq.~(\ref{eq:NS1}) but it leads to a noninvertible $S$-matrix. We need a concrete statement on the nontrivial braiding. The key is the following lemma:
    
    \begin{lemma} \label{lemma_1}
        Let $\sigma^{\ast}_{X} =\sum_{a} \frac{d_a^2}{\mathcal{D}^2} \sigma^a_X$ be the maximal-entropy element of $\Sigma(X)$, then
        \begin{equation}
        \Tr (U_L^{a\dagger} U_R^a \sigma^{\ast}_{X}) =\delta_{a,1}. \label{eq:lemma}
        \end{equation}
    \end{lemma}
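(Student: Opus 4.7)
The plan is to interpret $\Tr(U_L^{a\dagger} U_R^a \sigma^{\ast}_X)$ as an expectation value on a topologically enriched region produced by quantum state merging, where $U_L^{a\dagger} U_R^a$ becomes a non-contractible loop operator. First, I would expand via $\sigma^{\ast}_X = \sum_b \frac{d_b^2}{\mathcal{D}^2} \sigma^b_X$ to rewrite the trace as $\sum_b \frac{d_b^2}{\mathcal{D}^2} f_{ab}$. The $a=1$ case follows immediately: the vacuum-sector string may be chosen so that $U_L^{1\dagger} U_R^1$ stabilizes $\vert\psi\rangle$, giving $f_{1b}=1$ for every $b$, and the sum collapses to $\sum_b d_b^2/\mathcal{D}^2 = 1$.

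For $a \neq 1$, I would use the merging technique of Ref.~\cite{2019arXiv190609376S} to construct a density matrix $\tau$ supported on a region $Y$ of nontrivial topology (morally, a closed genus-one surface obtained by identifying the two boundary circles of the annulus) such that $\tau$ restricted to $X$ equals $\sigma^{\ast}_X$. The weights $d_b^2/\mathcal{D}^2$ are precisely those that arise from the vacuum extreme point of $\Sigma(Y)$ for a closed handle geometry, matching the familiar Kitaev--Preskill--Levin--Wen structure of torus ground states. One then deforms both $U_R^a$ and $U_L^a$ across the handle so that $U_L^{a\dagger} U_R^a$ closes into a Wilson loop winding the new non-contractible cycle of $Y$. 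By the extreme point structure of information convex sets and the one-to-one correspondence between sectors and extreme points of $\Sigma(X)$, this closed loop maps $\tau$ into an extreme point orthogonal to $\tau$ itself (labeled by $a$ rather than $1$), forcing the trace to vanish.

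The hard step will be the merging construction itself: showing that axioms \textbf{A0} and \textbf{A1} are strong enough to produce a physical state on a topology not embedded in the original plane whose restriction to $X$ is exactly the symmetric superposition with weights $d_b^2/\mathcal{D}^2$, rather than some other element of $\Sigma(X)$. One must also justify that $U_R^a$ and $U_L^a$, originally defined on the reference state, extend to honest deformable string operators across the glued interface, so that $U_L^{a\dagger} U_R^a$ genuinely becomes a non-contractible loop on $Y$ and not a locally contractible one. Bookkeeping the intermediate marginals so that all string-deformation arguments continue to apply after merging is where most of the technical work will live; this is precisely the ``generalized quantum state merging that generates a topology beyond that of any subsystem'' announced in the abstract.
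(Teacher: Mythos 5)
Your high-level strategy (use merging to manufacture a handle so that $U_L^{a\dagger}U_R^a$ effectively winds a noncontractible cycle, then argue orthogonality of flux sectors) is the same physical idea as the paper's proof, but the two steps you yourself flag as ``hard'' are exactly where the proposal fails, and the specific route you choose makes them harder than necessary. First, you posit a \emph{closed} genus-one region $Y$ and assume its ``vacuum extreme point'' restricts to $X$ as $\sum_b \frac{d_b^2}{\mathcal{D}^2}\sigma^b_X$, citing the familiar torus ground-state structure. Within this framework that input is not available: for a closed (or one-holed) torus the information convex set is not a simplex, its extreme points are not canonically labeled by sectors, and the $d_b^2/\mathcal{D}^2$ weights for a torus are essentially equivalent to the modularity statement being proven, so the appeal is circular. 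It is also unclear that the merging lemma can glue along an entire boundary circle to produce a closed surface at all. The paper avoids both problems: it merges the vacuum annulus state $\sigma^1_{ABC}$ with the reference state $\sigma_{CD}$ into a state $\tau_{ABCD}$ on a one-holed torus, and the needed facts $\Tr_D\tau_{ABCD}=\sigma^1_{ABC}$ and $\Tr_A\tau_{ABCD}=\sigma^{\ast}_{BCD}$ come directly from the merging results of Ref.~[SKK] (a merge of reference-state pieces forming an annulus is the maximum-entropy element), with no appeal to torus ground spaces.

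Second, your concluding step --- ``this closed loop maps $\tau$ into an extreme point orthogonal to $\tau$ itself'' --- is an assertion, not an argument; stated for the mixture $\sigma^{\ast}_X$ it is just $\sum_b S_{1b}S_{ab}=\delta_{a,1}$ in disguise, i.e.\ the lemma itself. What is missing is a concrete witness of orthogonality. In the paper, $U_R^a$ is supported on $BC$ and $U_L^a$ on $CD$; although $U_L^a\vert\psi\rangle=U_R^a\vert\psi\rangle$ on the plane, on the merged state they act differently, and the difference is detected on a third annulus: $\Tr_{CD}(U_L^a\tau U_L^{a\dagger})=\sigma^1_{AB}$ while $\Tr_{CD}(U_R^a\tau U_R^{a\dagger})=\sigma^{\bar a}_{AB}$. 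Since these are distinct (hence orthogonal) extreme points of the annulus simplex for $a\neq 1$, monotonicity of fidelity gives $U_L^a\tau U_L^{a\dagger}\perp U_R^a\tau U_R^{a\dagger}$, hence $\Tr(U_L^{a\dagger}U_R^a\tau_{ABCD})=0$; and because both strings are supported in $BCD$ with $\Tr_A\tau_{ABCD}=\sigma^{\ast}_{BCD}$, this equals $\Tr(U_L^{a\dagger}U_R^a\sigma^{\ast}_{BCD})$, which proves the lemma for an arbitrary annulus. To repair your proposal you would need to supply such a witness region and the fidelity argument (or an equivalent mechanism); the decomposition of $\sigma^{\ast}_X$ and the $a=1$ case are fine and agree with the paper.
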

    See Sec.~\ref{Sec.Lemma_proof} for the proof Lemma~\ref{lemma_1}.
    Based on Lemma~\ref{lemma_1}, we show that the $S$-matrix is unitary, and we further derive the Verlinde formula. 
    
    \begin{Proposition}
        The    $S$-matrix is unitary 
        and the Verlinde formula (\ref{eq:Verlinde}) holds.
    \end{Proposition}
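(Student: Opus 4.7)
The plan is to combine Lemma~\ref{lemma_1} with Proposition~\ref{Prop:1} in two steps: first extract a ``first-row'' orthogonality of $S$ directly from Lemma~\ref{lemma_1}, then bootstrap it to full unitarity using the fusion identity (\ref{eq:NS1}), and finally read off the Verlinde formula by inverting (\ref{eq:NS1}) with $S^{\dagger}$.

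First, I would expand $\sigma^{\ast}_X=\sum_b(d_b^2/\mathcal{D}^2)\sigma^b_X$ inside Lemma~\ref{lemma_1}, so that
$\sum_b (d_b^2/\mathcal{D}^2)\,f_{ab}=\delta_{a,1}$. Substituting $f_{ab}=\mathcal{D}S_{ab}/(d_a d_b)$ together with $d_b=\mathcal{D}S_{1b}$ (which is real), and using $d_1=1$, this collapses to the single orthogonality relation $\sum_b S_{1b}S_{ab}=\delta_{a,1}$. Because $S_{1b}$ is real and $S^{\ast}_{1b}=S_{\bar 1 b}=S_{1b}$, this is equivalent to $(SS^{\dagger})_{a1}=\delta_{a,1}$, i.e.\ the first row of $S$ is normalized and orthogonal to every other row.

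Next I would promote this partial orthogonality to the full statement $SS^{\dagger}=I$ by means of Proposition~\ref{Prop:1}. Rewriting (\ref{eq:NS1}) as $S_{ax}S_{bx}=S_{1x}\sum_c N_{ab}^c S_{cx}$ and applying the second symmetry in (\ref{eq:S_sym}) to replace $S^{\ast}_{bx}$ with $S_{\bar b x}$, I obtain
\begin{equation}
\sum_x S_{ax}S^{\ast}_{bx}=\sum_x S_{ax}S_{\bar b x}=\sum_c N_{a\bar b}^c \sum_x S_{1x}S_{cx}.
\end{equation}
The inner sum on the right is $\delta_{c,1}$ by the previous step, and $N_{a\bar b}^1=\delta_{a,b}$ because the vacuum appears in $a\times\bar b$ only when $b=a$. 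Hence $(SS^{\dagger})_{ab}=\delta_{ab}$, establishing modularity. Finally, the Verlinde formula follows by multiplying (\ref{eq:NS1}) by $S^{\ast}_{dx}=S_{\bar d x}$ and summing over $x$: the left-hand side gives $\sum_c N_{ab}^c(SS^{\dagger})_{cd}=N_{ab}^d$, while the right-hand side is manifestly $\sum_x S_{ax}S_{bx}S_{\bar d x}/S_{1x}$.

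The main obstacle is the bootstrapping step. Proposition~\ref{Prop:1} alone admits the trivial ``solution'' $f_{ab}\equiv 1$ (noted right after its statement), so Lemma~\ref{lemma_1} is essential to break that degeneracy, but the lemma in isolation only controls one row of $S$. The nontrivial content of the argument is that these two inputs fit together exactly: Proposition~\ref{Prop:1} transforms an arbitrary inner product of rows of $S$ into a fusion-weighted combination of the single orthogonality relation the lemma provides, and the identity $N_{a\bar b}^1=\delta_{a,b}$ from the fusion-rule axioms closes the loop. All other manipulations reduce to substituting $d_a=\mathcal{D}S_{1a}$ and applying the symmetries (\ref{eq:S_sym}), which are already in hand.
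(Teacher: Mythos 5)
Your proposal is correct and follows essentially the same route as the paper: extract $\sum_x S_{1x}S_{ax}=\delta_{a,1}$ from Lemma~\ref{lemma_1}, combine it with Eq.~(\ref{eq:NS1}) and $N_{a\bar b}^{1}=\delta_{a,b}$ (together with the symmetries (\ref{eq:S_sym})) to get unitarity, and then invert (\ref{eq:NS1}) with $S^{\dagger}$ to obtain the Verlinde formula. The only difference is cosmetic---you apply the conjugation symmetry before summing rather than after---and you spell out the final Verlinde step that the paper leaves implicit.
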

    \begin{proof}
        We only need to show that the $S$-matrix is unitary. The Verlinde formula follows from unitarity and Eq.~(\ref{eq:NS1}). Equation~(\ref{eq:lemma}) implies that $\sum_x S_{1x} S_{ax}= \delta_{a,1}$. Multiplying $S_{1x}$ to both sides of  Eq.~(\ref{eq:NS1}), doing the sum of $x$, and using $N_{ab}^1=\delta_{b,\bar{a}}$, one derives that 
        $\sum_x S_{ax} S_{bx} = \delta_{b,\bar{a}}$.
        This, together with the symmetry properties (\ref{eq:S_sym}), implies that the $S$-matrix is unitary. This completes the proof.
    \end{proof}
    The same logic applies to a generic string bundle, and the end result is the Verlinde formula for a generic number of excitations.

    \subsection{Proof of Lemma~\ref{lemma_1}}\label{Sec.Lemma_proof}

    \begin{figure}[h]
	\centering
    \begin{tikzpicture}
    \node[] (E) at (0,0) {\includegraphics[scale=0.6]{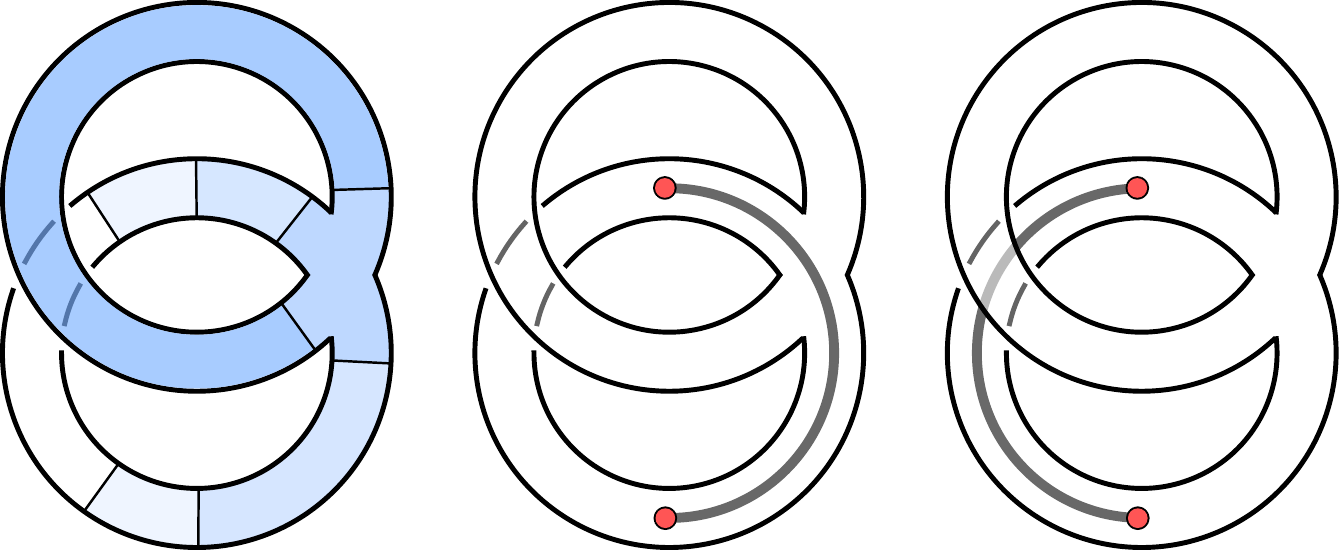}};
    \node[] (A) at (-2.87,1.47) {\scriptsize{$A$}};
    \node[] (B) at (-2.02,0) {\scriptsize{$B$}};
    \node[] (A) at (-2.57,0.48){\scriptsize{$C_1$}};
    \node[] (A) at (-3.15,0.48) {\scriptsize{$C_2$}};
    \node[] (A) at (-2.3,-1.3){\scriptsize{$C_1$}};
    \node[] (A) at (-3.15,-1.45) {\scriptsize{$C_2$}};    
    \node[] (B) at (-3.87,-0.65) {\scriptsize{$D$}};
    \node[] (A) at (-0.21, 0.51) {\footnotesize{$\bar{a}$}};    
    \node[] (B) at (-0.21,-1.485) {\footnotesize{$a$}};
    \node[] (A) at (3.05, 0.51) {\footnotesize{$\bar{a}$}};    
    \node[] (B) at (3.05,-1.485) {\footnotesize{$a$}};
    \node[] (B) at (-2.88,-2.07) {(a)};
    \node[] (B) at (-0.02,-2.07) {(b)};
    \node[] (B) at (2.86,-2.07) {(c)};
    \end{tikzpicture}
	\caption{(a) The merging of $\sigma^1_{ABC}$ and $\sigma_{CD}$, where $C=C_1C_2$. $ABCD$ is not a subsystem of the original system, and it has a topology equivalent to a torus with one hole.
		(b) The unitary string operator  $U_R^a$ is supported on $BC$. (c) The unitary string operator $U_L^a$ is supported on $CD$.  It is obtained from the deformation of $U_R^a$. }
	\label{Merging_PDF}
\end{figure}

    \begin{proof}
        For $a=1$, Eq.~(\ref{eq:lemma}) is trivially true. In order to derive Eq.~(\ref{eq:lemma}) for the case of $a\ne 1$, we consider the merging process described in Fig.~\ref{Merging_PDF}(a). We merge 
        \footnote{ Merging is a process to obtain a unique density matrix from two density matrices supported on smaller regions. It works for quantum Markov states satisfying a few simple conditions, see \cite{2019arXiv190609376S,2016PhRvA..93b2317K}.
        	The merging technique is first introduced in \cite{2016PhRvA..93b2317K}. 
        	The fact that the merging result of a pair of elements of information convex sets is an element of an information convex set, which is necessary for the proof in this paper, is established in \cite{2019arXiv190609376S}.} 
        two reduced density matrices of the reference state, namely, $\sigma^1_{ABC}$ and $\sigma_{CD}$, where $C=C_1C_2$. We call the density matrix obtained by this merging as $\tau_{ABCD}$. (The two states can be merged because they are identical on $C$ and the conditional mutual information $I(AB:C_2\vert C_1)= I(C_1:D\vert C_2)=0$ for the reference state.)
        Note that while $ABC$ and $CD$ are subsystems of the original physical system, the support of the merged state is not. This is because $A$ and $D$ on the original physical system overlap nontrivially, yet in the merged state, they do not share any common region. For the state $\tau_{ABCD}$, $A$ and $D$ belong to different Hilbert spaces. What is important here is that the merged state $\tau_{ABCD}$ exists even though one cannot obtain such a state by tracing out subsystems from the original physical system. 
        
        Let us consider the reduced density matrices of $\tau_{ABCD}$ on annuli $ABC$ and $BCD$.
        $\Tr_{D} \tau_{ABCD} = \sigma^1_{ABC}$ carries the vacuum sector. 
        $\Tr_{A} \tau_{ABCD} = \sigma^{\ast}_{BCD}$ is the maximal-entropy element of  $\Sigma(BCD)$.
        After applying $U_R^a$ or $U_L^a$  onto $\tau_{ABCD}$, see Fig.~\ref{Merging_PDF}, the sectors seen on $AB$ are
        \begin{eqnarray}
        \Tr_{CD} (U_L^a \tau_{ABCD} U_L^{a\dagger}) &=& \sigma^1_{AB}, \nonumber \\
        \Tr_{CD} (U_R^a \tau_{ABCD} U_R^{a\dagger})&=&\sigma^{\bar{a}}_{AB}. \nonumber
        \end{eqnarray}
        Thus, the two density matrices $U_L^a \tau_{ABCD} U_L^{a\dagger}$ and $U_R^a \tau_{ABCD} U_R^{a\dagger}$ are orthogonal for $a\ne 1$. This fact follows from that $ \sigma^1_{AB}\perp  \sigma^{\bar{a}}_{AB}$ for $a\ne 1$ and the monotonicity of fidelity. Therefore,
        \begin{equation}
        \Tr(U_L^{a \dagger} U_R^a \sigma^{\ast}_{BCD})= \Tr(U_L^{a \dagger} U_R^a \tau_{ABCD})=0,\quad \forall\,a\ne 1. \nonumber
        \end{equation}
        Since $BCD$ can be any annulus, Lemma~\ref{lemma_1} is justified.
    \end{proof}

    We would like to remark on a counterintuitive aspect of the proof.
    A careful reader may imagine another quantum state, $\rho_{ABCD}$, which reduces to $\sigma^1_{ABC}$ and $\sigma^1_{BCD}$. Then, by applying the same logic, one seems to get a contradiction, namely, $\Tr(U_L^{a\dagger} U_R^a \sigma^1_{BCD})=0$. It is not a real contradiction. Instead, it means that such a state $\rho_{ABCD}$  cannot exist. 
    This phenomenon can be understood from the entropic ``uncertainty principle" in Refs.~\cite{2012PhRvB..85w5151Z,Jian2015}, which implies that annuli $ABC$ and $BCD$ cannot both obtain the vacuum sector. Note that the $ABCD$ in Fig.~\ref{Merging_PDF}(a) is topologically equivalent to the 1-hole torus considered in~\cite{2012PhRvB..85w5151Z}. In comparison, our method does not make use of the global topology of the system, and $\tau_{ABCD}$ is constructed given the reduced density matrices within a disk region. It makes our method applicable to a broader context, e.g., a sphere or a torus with ground-state degeneracy modified by a closed defect line.

    As a corollary, the $S$-matrix is encoded in a single quantum many-body state, and moreover, we only need the reduced density matrix within a disk region. This result should be contrasted with~\cite{2012PhRvB..85w5151Z}, which makes use of multiple ground states on a torus. 
    We would like to compare our result with another recent attempt~\cite{haah2016invariant} to define the $S$-matrix from one single ground state. It makes assumptions concerning the Hamiltonian and the operator algebra. As the author remarks, the method therein requires the assumption of a unitary modular tensor category description to complete the argument that the invariant constructed matches the $S$-matrix.
    In comparison, with axioms {\bf A0} and {\bf A1} of \cite{2019arXiv190609376S}, we are able to define the $S$-matrix and derive the Verlinde formula it obeys.

    \section{Summary and discussion}
    We have derived the Verlinde formula from a law of entanglement natural for 2D gapped systems (axioms {\bf A0} and {\bf A1} of the framework \cite{2019arXiv190609376S}). From a 2D quantum state satisfying these axioms, we define a unitary $S$-matrix, which recovers the fusion multiplicities through the Verlinde formula.
    It shows that axiom {\bf A0} and {\bf A1} imply the nontrivial mutual braiding statistics of anyons in addition to the previously identified fusion rules. It deserves a further study on whether the entire emergent physical law of anyons is implied by these two conditions.
    
    Both the fusion rules and the $S$-matrix are encoded in a single quantum state. It supports the conjecture that the entire set of universal data of a topologically ordered system is encoded in one single ground-state wave function. To justify this conjecture, one may further attempt to extract the topological spins. It is recently noticed that  $S$ and $T$ matrices do not completely determine an anyon model \cite{2017arXiv170802796M,2018arXiv180505736B,2018arXiv180603158K,2019arXiv190810381W}, and therefore additional topological invariants need to be considered. We have generalized the merging technique to produce a quantum state supported on a topology beyond that of any subsystem (e.g., Fig.~\ref{Merging_PDF}). Moreover, the definition of the information convex set naturally generalizes into this context. We expect this observation to be useful in future studies.
    One may further attempt to define $F$ and $R$ symbols from a state satisfying axioms {\bf A0} and {\bf A1}. In light of the recent  operational  definition of $F$ and $R$ symbols for microscopic models \cite{2019arXiv191011353K}, it is plausible that progress can be made. This is because the framework  \cite{2019arXiv190609376S} provides well-defined unitary processes.

    Finally, it should be emphasized that deriving the axioms of the algebraic theory of anyon is stronger than extracting the anyon data. The power of our method precisely lies in the fact that it can \emph{derive} the emergent laws.
    Even though the algebraic theory of anyon is well-known by now, there is plenty of space for further exploration. Namely, there are physical systems for which the abstract framework (analogous to the algebraic theory of anyon) is difficult to guess, but the analogy of axiom {\bf A0} and {\bf A1} can be easily inferred.
    Such examples include a large class of three-dimensional gapped phases, topological defects, and the gapped domain walls separating two gapped phases. The logic developed in Ref.~\cite{2019arXiv190609376S} and this work will be a powerful tool in the study of these systems.

    \section*{Acknowledgement}
    I am grateful to Isaac H. Kim for reading the draft and providing helpful feedback. I thank Jeongwan Haah for useful comments.
     I further thank Kohtaro Kato and Yuan-Ming Lu for things I learned from them previously.   This work is supported by the National Science Foundation under Grant No. NSF DMR-1653769.

    \appendix
    \section{Fusion rules}\label{ap:fusion_rules}
    The fusion multiplicities obey the following rules:
    \begin{enumerate}
        \item $N_{ab}^c = N_{ba}^c$.
        \item $N_{1a}^{c}  =\delta_{a,c}$.
        \item $N_{a b}^1 = \delta_{b,\bar{a}}$.
        \item $N_{a b}^{c} =N_{\bar{b}\bar{a}}^{\bar{c}}$.
        \item $\sum_i N_{a i}^{d} N_{b c}^i = \sum_j N_{a b}^j N_{jc}^{d}$.
    \end{enumerate}
    These rules are derived from axioms {\bf A0} and {\bf A1} in Ref.~\cite{2019arXiv190609376S}.

    \bibliography{ref}
    \bibliographystyle{apsrev}
\end{document}